\newcommand{\para}[1]{\subparagraph*{#1}}
\newcommand{\N}{\mathbb{N}}
\newcommand{\eps}{\varepsilon}
\newcommand{\floor}[1]{\left\lfloor{#1}\right\rfloor}
\newcommand{\Otild}{\tilde{O}}
\newcommand{\TSHD}{\text{Hamming Distance Oracle}}
\newcommand{\HD}{\mathsf{HD}}
\newcommand{\D}{\mathsf{D}}
\newtheorem{problem}{Problem}
\title{Hamming Distance Oracleirpin Completion Distance Lower Bound}
\author{Itai Boneh}{Reichman University and University of Haifa, Israel}{itai.bone@biu.ac.il}{https://orcid.org/0009-0007-8895-4069}{supported by Israel Science Foundation grant 810/21.}
 \author{Dvir Fried}{Bar Ilan University, Israel}{friedvir1@gmail.com}{https://orcid.org/0000-0003-1859-8082}{supported by ISF grant no. 1926/19, by a BSF grant 2018364, and by an ERC grant MPM under the EU's Horizon 2020 Research and Innovation Programme (grant no. 683064).}
  \author{Shay Golan}{Reichman University and University of Haifa, Israel}{golansh1@biu.ac.il}{https://orcid.org/0000-0001-8357-2802}{supported by Israel Science Foundation grant 810/21.}
    \author{Matan Kraus}{Bar Ilan Univesity, Israel}{matan3@gmail.com}{https://orcid.org/0000-0002-2989-1113}{supported by the ISF grant no. 1926/19, by the BSF grant 2018364, and by the ERC grant MPM under the EU's Horizon 2020 Research and Innovation Programme (grant no. 683064).}
\authorrunning{Boneh et al.} 
\newtheorem{fact}{Fact}
\crefname{fact}{Fact}{Facts}
\crefname{problem}{Problem}{Problems}
\title{Hamming Distance Oracle}
\begin{document}

\maketitle
\begin{abstract}
In this paper, we present and study the \emph{Hamming distance oracle problem}.
In this problem, the task is to preprocess two strings $S$ and $T$ of lengths $n$ and $m$, respectively, to obtain a data-structure that is able to answer queries regarding the Hamming distance between a substring of $S$ and a substring of $T$. 

For a constant size alphabet strings, we show that for every $x\le nm$ there is a data structure with $\Otild(nm/x)$ preprocess time and $O(x)$ query time.
We also provide a combinatorial conditional lower bound, showing that for every $\eps > 0$ and $x \le nm$ there is no data structure with query time $O(x)$ and preprocess time $O((\frac{nm}{x})^{1-\eps})$ unless combinatorial fast matrix multiplication is possible.

For strings over general alphabet, we present a data structure with $\Otild(nm/\sqrt{x})$ preprocess time and $O(x)$ query time for every $x \le nm$.

\keywords{Hamming distance, Fine-grained complexity, Data structure.}
\end{abstract}

\clearpage
\setcounter{page}{1} 

\section{Introduction}\label{sec:intro}

Given two strings $S,T$ of the same length $n$, the hamming distance $\HD(S,T)$ is the number of mismatches between $S$ and $T$.
Hamming distance is arguably the most basic and common measure of similarity between strings.
The computational task of finding the Hamming distance of two given strings is trivial - it can be done in $O(n)$ time and nothing faster is possible.

In recent years, many classical string problems have been considered in the substring queries model. 
In this model, we are interested in preprocessing a string to obtain a data structure capable of efficiently answering queries regarding its substrings.
A few examples of results in the substring queries model include finding the longest increasing subsequence of a substring~\cite{T2008}, the longest common substring of two substrings~\cite{ACPR2020}, finding the period of a substring ~\cite{KRRW2012}, and applying pattern matching~\cite{KKFL2014,KRRW2014} and approximate pattern matching between substrings~\cite{CKW2020}.

In this paper, we present the natural problem of constructing a \textit{Hamming distance oracle}.

\begin{problem}[Hamming Distance Oracle]\label{prob:HDO}
Given two strings $S,T\in\Sigma^*$ of lengths $n$ and $m$, respectively, preprocess a data-structure that supports the following query:
For a length $\ell$ and two indices $i\in[n-\ell+1]$ and $j\in[m-\ell+1]$, compute $\HD(S[i..i+\ell-1],T[j..j+\ell-1])$.
The quality of the oracle is measured by its preprocessing time, and by its query time.
\end{problem}

For constant size alphabet strings, we show that for every $x\le nm$ there is a data structure with $\Otild(nm/x)$ preprocess time and $O(x)$ query time.
We also provide a combinatorial conditional lower bound, showing that our trade-off is optimal for combinatorial algorithms.

For strings over general alphabet, we present a data structure with $\Otild(nm/\sqrt{x})$ preprocess time and $O(x)$ query time for every $x \le nm$. See \cref{fig:complex} for illustration.

\begin{figure}[ht!]
  \begin{center}
 \includegraphics[scale=0.5]{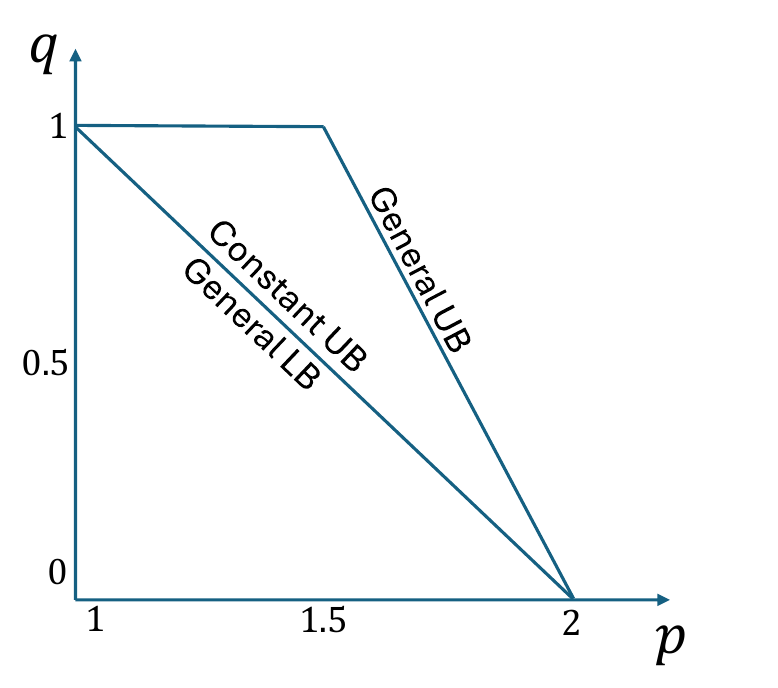} 
  \caption{A summary of our results for $n=m$. 
  the $p$-axis corresponds to the exponent of the preprocess time and the $q$-axis corresponds to the exponent of the query time.
  For example, we have a general upper bound of $\Otild(n^{1.75})$ preprocess time and $O(\sqrt{n})$ query time.
  Note that the lower bound is combinatorial.
  \label{fig:complex}}
   \end{center}
   \vspace*{-2mm}
\end{figure}

\para{Related work.} Charalampopoulos et. al ~\cite{CGMW21} considered the problem of constructing an \textit{Edit Distance Oracle}.
That is, preprocess two input strings $S$ and $T$ to obtain a data structure that is able to compute the edit distance of two given substrings of $S$ and $T$.
Charalampopoulos et al. ~\cite{CGMW21} provided an optimal (up to sub-polynomial factors) data structure with $O(N^{1+o(1)})$ preprocess time and $O(\mathbf{polylog}(n))$ query time with $N=|S| \cdot |T|$.
It is quite surprising that the problem of constructing a Hamming distance oracle has not been considered, as Hamming distance is arguably more fundamental than edit distance.

\para{Organization} in \cref{sec:prelim} we present basic notations and definitions. 
In \cref{sec:upper_bound} we prove the stated upper bound (see \cref{thm:upper_bound}). In \cref{sec:lower_bound} we prove the stated lower bounds (see \cref{thm:lb}).


\section{Preliminaries}\label{sec:prelim}

For $i,j\in\N$ we denote $[i..j] =\{k \in \N \mid i\le k \le j \}$ and $[i]=[1..i]$.

A string $S$ of length $|S|=n$ over an alphabet $\Sigma$ is a sequence of characters $S=S[1]S[2]\ldots S[n]$.
For $i,j\in[n]$, we call $S[i..j]=S[i]S[i+1]\ldots S[j]$ a \emph{substring} of $S$. If $i=1$, $S[i..j]$ is a prefix of $S$, and if $j=|S|$, $S[i..j]$ is a suffix of $S$.
Let $S$ and $T$ be two strings over an alphabet $\Sigma$.
$S\cdot T$ is the concatenation of $S$ and $T$.

For two strings $S$ and $T$ of the same length $n$, the Hamming distance~\cite{Hamming50} of $S$ and $T$ is defined as 
$\HD(S,T)=|\{ i\in[n]\mid S[i]\neq T[i]\}|$.



\section{Hamming Distance Oracle}\label{sec:upper_bound}
Here we introduce our upper bound for the $\TSHD$ problem.

We first reduce \cref{prob:HDO} to the problem of computing the Hamming distance of two \emph{suffixes}. 
In order to do so, we slightly abuse the notation and define the Hamming distance between two strings of \emph{different} lengths as follows:
Let $S$ and $T$ be two strings of lengths $n$ and $m$ respectively, we define the Hamming distance between $S$ and $T$ to be the Hamming distance between their prefixes of length $\min\{n,m\}$. Formally, $\HD(S,T)=\HD(S[1..\min\{n,m\}],T[1..\min\{n,m\}])$.

\begin{problem}[Suffixes Hamming Distance Oracle]\label{prob:SHDO}
Given two strings $S,T\in\Sigma^*$ of lengths $n$ and $m$, respectively, preprocess a data-structure that supports the following query:
For two indices $i\in[n]$ and $j\in[m]$, compute $\HD(S[i..n],T[j..m])$.
The complexity of the oracle is measured by its preprocessing time, its query time, and its space usage.
\end{problem}

Due to the following fact, given an oracle for \cref{prob:SHDO}, one can answer any query of \cref{prob:HDO} by two queries to the oracle. 

\begin{fact}\label{fact:HD_diff}
    For every $i\in  [n]$, $j \in [m]$ and $\ell \in [\min(n-i,m-j)]$, we have $\HD(S[i..i+\ell-1],T[j..j+\ell-1]) = \HD(S[i..n],T[j..m])-\HD(S[i+\ell..n],T[j+\ell..m])$.
\end{fact}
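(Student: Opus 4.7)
The plan is to prove the identity by expanding all three Hamming distances as sums of mismatch indicators and observing that the right-hand side is a telescoping difference that leaves exactly the mismatches in the first $\ell$ aligned positions.

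First, I would fix notation. Let $L = \min(n-i+1, m-j+1)$, so by the convention extending $\HD$ to strings of different lengths, we have
\[
\HD(S[i..n], T[j..m]) \;=\; \sum_{k=0}^{L-1} \mathbf{1}\!\left[S[i+k]\neq T[j+k]\right].
\]
Since $\ell \le \min(n-i, m-j) = L-1$, the prefix of length $\ell$ of both suffixes is well-defined, and the shifted suffixes $S[i+\ell..n]$ and $T[j+\ell..m]$ have lengths $n-i-\ell+1 \ge 1$ and $m-j-\ell+1 \ge 1$, with the shorter of the two being $L-\ell$. Hence
\[
\HD(S[i+\ell..n], T[j+\ell..m]) \;=\; \sum_{k=\ell}^{L-1} \mathbf{1}\!\left[S[i+k]\neq T[j+k]\right],
\]
after reindexing the summation so that the running index matches the absolute offset from $i$ (respectively $j$).

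Subtracting the second sum from the first leaves precisely the indicators for offsets $k \in \{0, 1, \ldots, \ell-1\}$, which by definition equals $\HD(S[i..i+\ell-1], T[j..j+\ell-1])$. The only point that requires any care is verifying that the shorter side of the pair $(S[i+\ell..n], T[j+\ell..m])$ is indeed $L-\ell$, i.e., that the $\min$ is preserved under the shift; this follows immediately because subtracting the same $\ell$ from $n-i+1$ and $m-j+1$ does not change which one is smaller. There is no real obstacle here — the claim is a straightforward additivity property of Hamming distance along aligned positions, and the proof consists of writing the three quantities as mismatch counts and splitting the index range at $\ell$.
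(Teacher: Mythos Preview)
Your argument is correct: expanding each Hamming distance as a sum of mismatch indicators and splitting the index range at $\ell$ is exactly the natural verification. The paper itself does not give a proof of this fact, treating it as self-evident, so your write-up simply fills in the routine details.
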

From now on we will focus on introducing an oracle for \cref{prob:SHDO}.
The running time of our oracle (see \cref{thm:upper_bound}) depends on the complexity of the Text-to-Pattern Hamming distance problem.
Let $T_\HD(n,m,\Sigma)$ be the time complexity of computing the Text-to-Pattern Hamming Distance where $n$ is the length of the text, $m$ is the length of the pattern and both strings are over an alphabet $\Sigma$.
Notice that for any alphabet $\Sigma$ we have $T_{\HD}(n,m,\Sigma)=O(|\Sigma|\cdot n\log m)$ using FFT (by Fischer and Paterson~\cite{FP74}).
For general alphabet $\Sigma$ we have $T_{\HD}(n,m,\Sigma)=O(n\sqrt m)$ randomized (by Chan et al.~\cite{CJVWX23}) or $T_{\HD}(n,m,\Sigma)=O(n\sqrt {m\log\log m})$ deterministic (by Jin and Xu~\cite{JX24}).

\begin{theorem}\label{thm:upper_bound}
    Fix $x\geq 1$.
    Given two strings $S, T$ over an alphabet $\Sigma$, such that $|S|=n$, $|T|=m$ and $m\leq n$.
    There exists a data structure for \cref{prob:SHDO} with a preprocessing time of $O(\frac{n}{x}\cdot T_{\HD}(m,x,\Sigma))$, and a query time of $O(\min(m,x))$.

\end{theorem}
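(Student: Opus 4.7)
The plan is to reduce the suffix Hamming distance oracle to range-sum queries along the \emph{diagonals} of the $S$-vs-$T$ comparison table. For a query $(i,j)$, writing $\ell:=\min(n-i+1,m-j+1)$, every character pair compared in $\HD(S[i..n],T[j..m])$ lies on the diagonal $d:=j-i$, so if I define the diagonal prefix sum
\[
\Sigma_d(s) \;:=\; \sum_{t=1}^{s}\,[S[t]\neq T[t+d]]
\]
(summing only over $t$ for which both characters exist), the desired value is just $\Sigma_d(i+\ell-1)-\Sigma_d(i-1)$. Storing every $\Sigma_d(s)$ would cost $\Theta(nm)$, so I would store these values only at the sampled positions $i_k := (k-1)x+1$ for $k=1,\dots,\lceil n/x\rceil$, giving a table of $O(nm/x)$ entries indexed by $(d,k)$.

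To populate the table within budget, for each sample $i_k$ I would invoke the text-to-pattern Hamming distance algorithm once, with pattern $S[i_k..i_k+x-1]$ and text $T$, obtaining
\[
h_k(j) \;:=\; \HD(S[i_k..i_k+x-1],T[j..j+x-1])
\]
for every valid $j$ in time $T_{\HD}(m,x,\Sigma)$. Summed over the $O(n/x)$ samples, this yields the claimed preprocessing cost $O((n/x)\cdot T_{\HD}(m,x,\Sigma))$. A telescoping argument then gives the recurrence
\[
\Sigma_d(i_{k+1}-1) \;=\; \Sigma_d(i_k-1)+h_k(i_k+d),
\]
since the length-$x$ window added to diagonal $d$ between samples $k$ and $k{+}1$ is precisely the alignment of $S[i_k..i_k+x-1]$ with $T[i_k+d..i_k+d+x-1]$. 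Iterating this recurrence per diagonal fills the stored table with $O(nm/x)$ constant-time additions, absorbed by the text-to-pattern cost.

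To answer a query $(i,j)$, I would compute $\ell$ and $d=j-i$, locate by arithmetic the unique samples $i_k,i_{k'}$ with $i_k\le i<i_{k+1}$ and $i_{k'}\le i+\ell\le i_{k'+1}$, read the two stored prefix sums $\Sigma_d(i_k-1)$ and $\Sigma_d(i_{k'}-1)$, and close the gaps on either side by directly comparing fewer than $x$ character pairs; if instead $\ell\le x$ I compute the Hamming distance naively in $O(\ell)$. Either way the query costs $O(\min(m,x))$. The main obstacle will be boundary housekeeping: samples $i_k$ near the endpoints of $S$, or diagonals $d$ near the extremes, can produce length-$x$ blocks that overflow $S$ or $T$, so the recurrence does not apply for those $(d,k)$ pairs. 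I plan to restrict the stored table to $(d,k)$ whose block lies entirely within bounds, and when a query lands near such a boundary I fall back to direct $O(x)$ character comparison on the truncated residual; since $\ell$ never exceeds the valid overlap, each residual still spans fewer than $x$ positions, so the asymptotics are preserved.
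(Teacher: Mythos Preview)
Your approach is essentially the same as the paper's: both sample the diagonals of the $n\times m$ comparison table every $x$ positions, compute the sampled values by $O(n/x)$ invocations of text-to-pattern Hamming distance with pattern length $x$ and text $T$, and answer a query by one table lookup plus at most $x$ direct character comparisons. The only cosmetic difference is the direction of accumulation: the paper stores the \emph{suffix} sums $\D[i\cdot x,j]=\HD(S[i\cdot x..n],T[j..m])$ directly (filling rows in decreasing order), whereas you store \emph{prefix} sums $\Sigma_d(i_k-1)$ along each diagonal (filling in increasing order). Because the problem asks only for suffix distances, the paper's choice lets a query use a single stored value and a single left-side gap, while your prefix-sum formulation forces you to read two stored values and close gaps on both ends; this costs a constant factor but does not change the asymptotics, and your version would in fact extend verbatim to the general substring oracle of \cref{prob:HDO} without going through \cref{fact:HD_diff}.
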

\begin{proof}
We first present a simple dynamic programming algorithm, which prove the theorem for $x=1$.
We then show how we can compute only portion of the dynamic programming table, and then bound the time the data structure needs for answering a query.

Let $\D$ be a matrix of size $n\times m$ such that $\D[i,j]=\HD(S[i..n],T[j..m])$, i.e. the hamming distance of the $i$th suffix of $S$ and the $j$th suffix of $T$.
Then, the matrix $D$ satisfies the following recursion:
\begin{equation}\label{eq:still_basic_recursion}
	\D[i,j] =
	\begin{cases}
		\HD(S[i], T[j]) & \text{ if } i=n \vee j=m,\\
		\HD(S[i],T[j]) + \D[i+1,j+1] &\text{otherwise}
	\end{cases}
\end{equation}
Notice that by \cref{eq:still_basic_recursion} one can compute each value $D[i,j]$ in $O(1)$ time (assuming a right order of computation).
Thus, filling the dynamic programming table takes $O(nm)$ time in total.
Due to \cref{fact:HD_diff}, the table $D$ completes the proof for $x=O(1)$.



We proceed to describe our construction for any $x \ge 1$.
\para{Preprocessing.}
The data structure maintains a small portion of the table $\D$:
For each $i\in[1..\floor{\frac{n}{x}}]$,
the algorithm computes and stores the $i\cdot x$'s row, i.e. for each $(i,j) \in[1..\floor{\frac{n}{x}}] \times [m]$ the data structure computes the cell $\D[i\cdot x,j]$.
Notice that (by \cref{fact:HD_diff}), the value of any cell $D[i,j]$ is the sum of the Hamming distance  $\HD(S[i..i+\ell-1],T[j..j+\ell-1])$ and $D[i+\ell,j+\ell]$ (if the indices of the cell are not in the table, we consider the cell value as $0$). 
In particular,
\begin{align*}
\D[i\cdot x,j] &=\HD(S[i\cdot x..n],T[j..m])\\
&=\HD(S[i\cdot x..(i+1)x-1],T[j..j+x-1]) + \HD(S[(i+1)x..n],T[j+x..m])\\
&=\HD(S[i\cdot x..(i+1)x-1],T[j..j+x-1]) + \D[(i+1)\cdot x,j+x].
\end{align*}
Therefore, by computing the rows in decreasing order (and using cells from rows already computed), the data structure is able to compute each row with a single invocation of a text-to-pattern Hamming distance algorithm with text $T$ and pattern $S[i\cdot x .. (i+1)x -1]$.
In every such invocation, the text's size is $|T|=m$, the pattern size is $x$, and the strings are over the same alphabet and therefore the running time per computed row is $T_{\HD}(m,x,\Sigma)$.
Thus, the preprocessing time is $O(\frac nx\cdot T_\HD(m,x,\Sigma))$ as required.
\para{Query.}
Without loss of generality, we focus on the case $x\leq m$, since otherwise one can answer a query in $O(m)\subseteq O(x)$ time na\"ively by comparing up to $m$ pairs of characters, even without preprocessing.
Let $a$ be the minimal integer multiply of $x$ larger or equal to $i$.
The algorithm first computes $\HD(S[i..a-1],T[j..a-1])$ na\"ively by comparing pairs of corresponding characters.
Then the algorithm returns $\HD(S[i..a-1],T[j..a-1])+D[a,j+a-i]$.
Since $a$ is an integer multiple of $x$, we have that $D[a,j+a-i]$ is accessible in $O(1)$ time after the preprocessing (in the case where $(a,j+a-i)$ are outside the boundaries of the table, then $D[a,j+a-i]=0$ and is also accessible in $O(1)$ time).
Therefore the running time of the query is $O(a-i)=O(x)$.
The correctness follows immediately from \cref{fact:HD_diff} (see \cref{fig:query}).\qedhere

\begin{figure}[ht!]
  \begin{center}
 \includegraphics[scale=0.5]{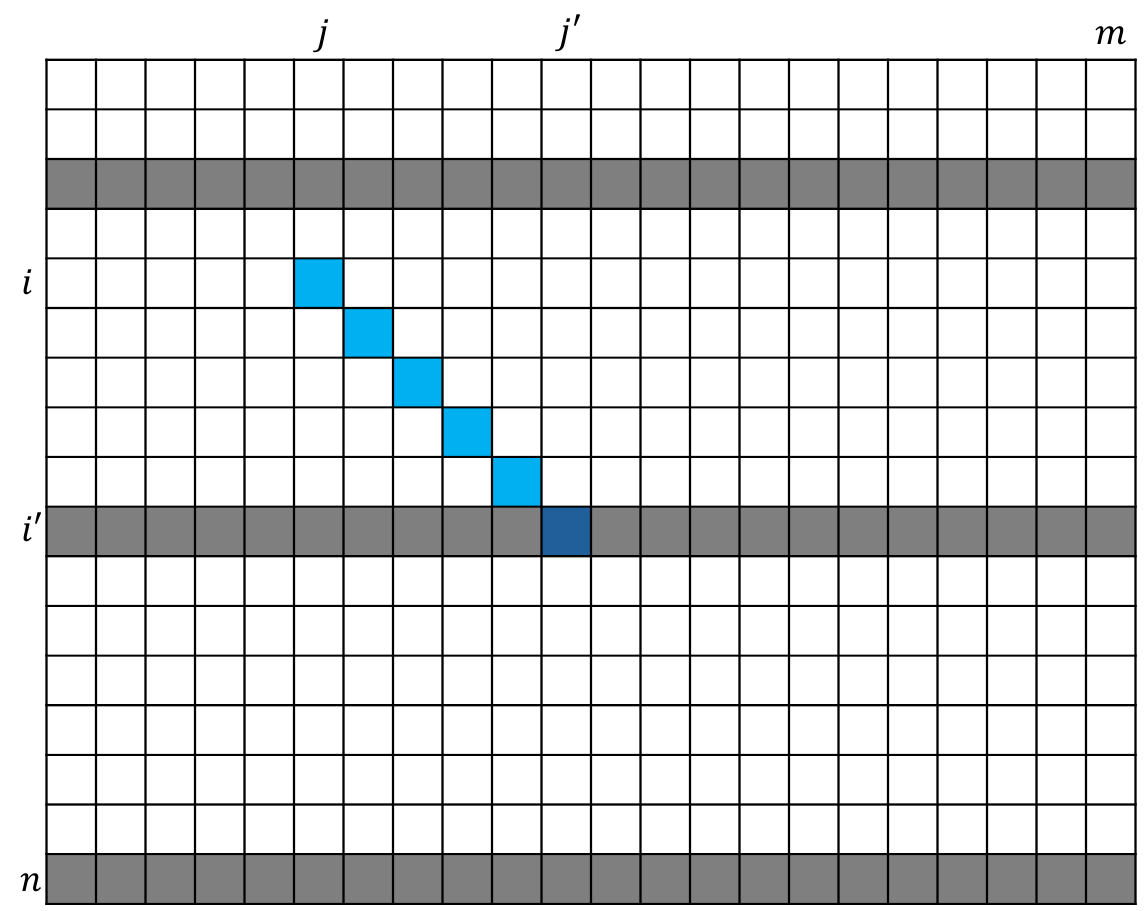} 
  \caption{An example of a query.
  The grey rows are computed in the prerocess.
  In order to compute $\HD(S[i..n],T[j..m])$, it is enough to compute $\HD(S[i..i'-1],T[j..j'-1])$ na\"ively and add $\D(i',j')$.
  \label{fig:query}}
   \end{center}
   \vspace*{-2mm}
\end{figure}


\end{proof}

\section{Lower Bound for the binary case of $\TSHD$}\label{sec:lower_bound}
In this section we establish the hardness of \cref{prob:HDO}, at least for combinatorial algorithms.
The lower bound is based on the known conjecture that combinatorial boolean matrix multiplication cannot be (polynomialy) faster than the na\"ive algorithm. 

\begin{conjecture}[Combinatorial Matrix Multiplication, see~\cite{GU18}]\label{conj:cmm}
For any $\alpha,\beta,\gamma,\varepsilon> 0$, there is no
combinatorial algorithm for multiplying an $n^\alpha \times n^\beta$ matrix with an $n^\beta \times n^\gamma$ matrix in time
$O((n^{\alpha + \beta + \gamma})^{(1-\varepsilon)})$ \footnote{~\cite{GU18} stated the running time as $O(n^{\alpha +\beta+\gamma - \varepsilon})$. It can be easily shown that our statement is equivalent.}.
\end{conjecture}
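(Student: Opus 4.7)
The claim being asserted at the end of the excerpt (inside the footnote) is that the $O((n^{\alpha+\beta+\gamma})^{1-\varepsilon})$ formulation of combinatorial matrix multiplication hardness is equivalent to the GU18 formulation $O(n^{\alpha+\beta+\gamma-\varepsilon})$. Since the conjecture itself is cited from prior work, the only thing to prove is this equivalence. The plan is to show that refuting either quantifier-over-$\varepsilon$ statement refutes the other, via a straightforward change of variable in the exponent. The key observation is that $\alpha+\beta+\gamma$ is a fixed positive constant (strictly positive by the conjecture's hypothesis), so both forms describe a polynomial improvement over the naive $n^{\alpha+\beta+\gamma}$ bound, parametrized differently.

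Concretely, I would argue both implications. For the forward direction, assume a combinatorial algorithm runs in $O(n^{\alpha+\beta+\gamma-\varepsilon})$ for some $\varepsilon>0$. Factoring the naive exponent yields $n^{\alpha+\beta+\gamma-\varepsilon} = (n^{\alpha+\beta+\gamma})^{1 - \varepsilon/(\alpha+\beta+\gamma)}$, so the same algorithm witnesses a violation of our conjecture with $\varepsilon' := \varepsilon/(\alpha+\beta+\gamma) > 0$. For the reverse direction, expanding $(n^{\alpha+\beta+\gamma})^{1-\varepsilon}$ to $n^{\alpha+\beta+\gamma - (\alpha+\beta+\gamma)\varepsilon}$ gives the GU18 form with $\varepsilon'':=(\alpha+\beta+\gamma)\varepsilon > 0$. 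In both directions, positivity of $\alpha,\beta,\gamma$ guarantees that the rescaled constant is still a positive real, which is exactly what the respective conjecture requires.

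The main (minor) point of care is verifying that $\alpha+\beta+\gamma > 0$, which is immediate from the conjecture's hypothesis $\alpha,\beta,\gamma > 0$, so dividing and multiplying by it is legitimate. Beyond that single algebraic step there is no further mathematical content, which matches the ``easily shown'' phrasing in the footnote. I do not anticipate any real obstacle; this is purely a rescaling of the hidden constant $\varepsilon$, and the equivalence carries through uniformly for every choice of $\alpha,\beta,\gamma$.
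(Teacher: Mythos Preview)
Your proposal is correct and matches the paper's intent: the paper itself does not spell out the equivalence beyond the footnote's ``can be easily shown,'' and the rescaling $\varepsilon \leftrightarrow \varepsilon/(\alpha+\beta+\gamma)$ you give is exactly the one-line argument that justifies it. There is nothing further to add.
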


Based on \cref{conj:cmm} we establish the hardness of \cref{prob:HDO} even for binary alphabet as follows.


\begin{theorem}[Lower bound]\label{thm:lb}
Let $O$ be a combinatorial oracle for \cref{prob:HDO} for strings over an alphabet $\Sigma$ with $|\Sigma|\ge 2$.
Let $p(n,m)$ and $q(n,m)$ be the preprocessing and query time of $O$, respectively.
Then there is no $\eps>0$ such that $p(n,m)\cdot q(n,m)=O((nm)^{1-\eps})$, unless \cref{conj:cmm} is false.
\end{theorem}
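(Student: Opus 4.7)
\para{Proof plan.}
My plan is to reduce combinatorial rectangular Boolean matrix multiplication to the Hamming distance oracle on a binary alphabet, tuning the aspect ratio of the product to match the oracle's preprocess/query trade-off.

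Given Boolean matrices $A\in\{0,1\}^{N_1\times N_2}$ and $B\in\{0,1\}^{N_2\times N_3}$, I would form $S\in\{0,1\}^{N_1 N_2}$ by concatenating the rows $A_1,\ldots,A_{N_1}$ of $A$ and $T\in\{0,1\}^{N_2 N_3}$ by concatenating the columns $B^{(1)},\ldots,B^{(N_3)}$ of $B$, so that every row of $A$ and every column of $B$ is a contiguous substring of length $N_2$. After an $O(N_1 N_2+N_2 N_3)$ preprocessing that records the Hamming weights $r_i$ of each row and $c_k$ of each column, each oracle answer $h_{i,k}=\HD(A_i,B^{(k)})$ gives the inner product $\langle A_i, B^{(k)}\rangle=(r_i+c_k-h_{i,k})/2$, whose positivity is exactly the Boolean product entry $(AB)[i,k]$. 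Hence $AB$ is obtained from one preprocessing call and $N_1 N_3$ substring queries, in total time $p+N_1 N_3\cdot q+O(N_1 N_2+N_2 N_3)$, where $p=p(N_1 N_2, N_2 N_3)$ and $q=q(N_1 N_2, N_2 N_3)$.

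The key step is to tune $N_2$ to the oracle. Taking $N_1=N_3$, writing $n=N_1 N_2=N_2 N_3$ and $X=N_1 N_3$, so that $N_2=n/\sqrt{X}$ and the matrix-multiplication volume is $N_1 N_2 N_3=n\sqrt{X}$, the running time becomes $p(n,n)+X\cdot q(n,n)+O(n)$. I would set $X=\lceil p(n,n)/q(n,n)\rceil$ in order to balance the two dominant terms, so the cost is $\Theta(p)=\Theta(\sqrt{X\cdot pq})$. Under the hypothesis $p\cdot q=O(n^{2-2\eps})$ this equals $O(n^{1-\eps}\sqrt{X})=O(n^{-\eps}\cdot n\sqrt{X})=O\!\bigl((n\sqrt{X})^{1-\delta}\bigr)$ for every $\delta<\eps/(2-\eps)$ and large enough $n$ (using $n\sqrt{X}\le n^{2-\eps}$). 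This gives a combinatorial algorithm for the $N_1\times N_2\times N_3$ matrix product whose running time is strictly below the \cref{conj:cmm} threshold, yielding the required contradiction.

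I expect the main obstacle to be verifying that $X=\lceil p/q\rceil$ lies in the feasible range $[1,n^2]$, so that $N_1,N_2,N_3$ are genuine positive integer dimensions of a matrix product to which \cref{conj:cmm} applies. Writing $p=n^a$ and $q=n^b$ with $a+b\le 2-2\eps$, I would use the trivial inequalities $p\ge\Omega(n)$ (the preprocessing must read its input, so $a\ge 1$) and $q\le O(n)$ (the na\"ive character-by-character query, so $b\le 1$). These force $b\le 1-2\eps$ and hence $a-b\ge 2\eps$, which gives $X=n^{a-b}\ge n^{2\eps}>1$; the upper bound $X\le p\le pq\le O(n^{2-2\eps})<n^2$ is immediate. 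Integer rounding of $\sqrt{X}$ and $n/\sqrt{X}$, together with zero-padding $A$ and $B$ to matching dimensions, affect only constants in the running time and hence do not disturb the contradiction with \cref{conj:cmm}.
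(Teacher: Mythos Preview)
Your proposal is correct and follows the same overall reduction as the paper: encode the rows of $A$ and the columns of $B$ as consecutive length-$N_2$ blocks of two strings, build the oracle on those strings, and recover each entry of $AB$ with one query. The differences are in the details. To extract a Boolean-product entry from a Hamming distance over the binary alphabet you use the identity $\langle u,v\rangle=(|u|_1+|v|_1-\HD(u,v))/2$ together with precomputed row and column weights; the paper instead passes through a ternary encoding (the symbol $1$ stays, while $0$-entries of $A$ and of $B$ map to two distinct fresh symbols) under which $(AB)_{ij}=1\iff\HD<x$, and then replaces each ternary symbol by a length-$3$ binary codeword of pairwise Hamming distance~$2$. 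Your route is more direct here. For the parameter tuning you specialize to $n=m$ and take inner dimension $N_2=n/\sqrt{X}$ with $X=\lceil p/q\rceil$, while the paper keeps general $n,m$ and sets the inner dimension to $x=q(n,m)\cdot(nm)^{\eps/2}$; both choices yield a product whose na\"ive cost $N_1N_2N_3$ is beaten by a $(1-\Omega(\eps))$ power, and since the hypothesis $p\cdot q=O((nm)^{1-\eps})$ is assumed for all $n,m$, restricting to $n=m$ is legitimate. Your explicit feasibility check that $X\in[1,n^2]$, via $p\ge\Omega(n)$ and $q\le O(n)$, is something the paper leaves implicit.
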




\begin{proof}
    As mentioned in \cref{sec:intro}, our reduction is based on idea attributed to Ely Porat and Piotr Indyk~\cite{Clifford09} that was later generalized by Gawrychowski and Uznanski~\cite{GU18}.

    Assume by contradiction that there exists some $\eps>0$ such that $p(n,m)\cdot q(n,m)=O((nm)^{1-\eps})$.
    Let $x=q(n,m)\cdot (nm)^{\eps/2}$ and note that $q(n,m)=\frac{x}{(nm)^{\eps/2}}$.
%
    %
    Let $A$ and $B$ be two boolean matrices of sizes $\frac nx\times x$ and $x\times \frac mx$, respectively.
    The idea is to create a string $S$ representing the rows of $A$ and a string $T$ representing the columns of $B$.
    We transform each row of $A$ and each column of $B$ such that the Hamming distance between the row's string and the column's string indicates the boolean product of the corresponding row and column.
    We first show a construction using ternary alphabet, and then explain how one can convert it into binary alphabet.

    \para{Encoding.}
    We define encoding of values into characters, which is as follows:
    \begin{itemize}
        \item each $1$ is encoded by the character $'1'$.
        \item each $0$ from $A$ is encoded by the character $'x'$.
        \item each $0$ from $B$ is encoded by the character $'y'$.
    \end{itemize}
    For each $i\in [\frac nx]$ the algorithm encodes the $i$th row of $A$ as $A_{i}$, and for each $j\in [\frac mx]$ the algorithm encodes the $j$th column of $B$ as $B_{j}$.
    Then $S$ is the concatenation of $A$'s rows (i.e. $S=A_1\cdot A_2\cdot...A_{n/x}$), and $T$ is the concatenation of $B$'s columns ($T=B_1\cdot B_2\cdot...B_{m/x}$).
    The resulted strings $S$ and $T$ has length of $n$ and $m$ (respectively), and then the algorithm creates the Hamming Distance Oracle data structure  $O$ of $S$ and $T$.

    In order to explain how the algorithm computes the boolean product $AB$ using queries on $S$ and $T$, we introduce the following observation.

    \begin{lemma}\label{obs:dah}
        $(AB)_{ij}=1 \iff \HD(A_{i},B_{j})<x$
    \end{lemma}
    \begin{proof}
        $(AB)_{ij}=1$ if and only if there exists some $k \in [x]$ such that $A[i,k]=1=B[k,j]$.
        That means that both $A[i,k]$ and $B[k,j]$ were encoded by the character $'1'$, and $\HD(A[i,k],B[k,j])=0$.
        This implies that for $A_{i}$ and $B_{j}$ there is a match in at least one index, and therefore $\HD(A_{i},B_{j})<x$.
        All the transitions are bidirectional, so the proof is complete.
    \end{proof}

    To prove the hardness for binary alphabet
    one can find binary encoding for each symbol in $\{1,x,y\}$ that preserves the mismatches: for each $a\neq b \in \{1,x,y\}$ we can encode $a$ and $b$ such that $\HD(E(a),E(b))=k$ for some positive integer $k=O(1)$
    , and then check if the distance between $A_i$ and $B_j$ is strictly less than $k\cdot x$.
    An example for such encoding is $'011','101','110'$.
    For this encoding, it holds that the Hamming distance of any two different words is $2$.
    
    Due to \cref{obs:dah} and the discussion above, the algorithm would query the data structure for each entry of $AB$ if $\HD(A_{i},B_{j})<2x$ and fill in the matrix $AB$ accordingly.

    The correctness comes directly from \cref{obs:dah}.
    Thus, using the oracle $O$, one can compute the boolean matrix product $AB$ in $O(p(n,m)+\frac{nm}{x^2}q(n,m))$ time.
    
    Recall that $p(n,m)\cdot q(n,m)=O((nm)^{1-\eps})$ and
    $q(n,m)=\frac{x}{(nm)^{\eps/2}}$.
    Thus, we get that     \[ p(n,m)=O((nm)^{1-\eps}/q(n,m))=O((nm)^{1-\eps/2}/x)=O\left(\left(\frac{nm}{x}\right)^{1-\eps/2}\right).\]

    Moreover,     
    \[\frac{nm}{x^2}q(n,m)=
    \frac{nm}{x^2}\frac x{(nm)^{\eps/2}}=
    \frac{nm}{x}\frac 1{(nm)^{\eps/2}}=
    \frac{(nm)^{1-\eps/2}}{x}=O\left(\left(\frac{nm}{x}\right)^{1-\eps/2}\right).\]
    Thus, the time for computing the product $AB$ using the oracle $O$ is $O((\frac {nm}{x})^{1-\eps/2})$, which contradicts \cref{conj:cmm} that suggests that no algorithm can compute $AB$ in $O((\frac {nm}{x})^{1-\eps'})$ time.
\end{proof}


\bibliography{bib}

\end{document}